\def\ps@pprintTitle{%
	\let\@oddhead\@empty
	\let\@evenhead\@empty
	\def\@oddfoot{}%
	\let\@evenfoot\@oddfoot}
\newtheorem{df}{Definition}[section]
\newtheorem{tm}{Theorem}[section]
\newtheorem{obs}{Observation}[section]
\newtheorem{lm}{Lemma}[section]
\newtheorem{cor}{Corollary}[section]
\journal{SODA22}
\newcommand{\ignore}[1]{}
\begin{document}

\begin{frontmatter}

\title{Anti Tai Mapping for Unordered Labeled Trees}%\tnoteref{mytitlenote}}
%\tnotetext[mytitlenote]{Fully documented templates are available in the elsarticle package on \href{http://www.ctan.org/tex-archive/macros/latex/contrib/elsarticle}{CTAN}.}

%% or include affiliations in footnotes:
\author[osijek]{Mislav Bla\v{z}evi\'c}
\author[lmu]{Stefan Canzar}
\author[uae]{Khaled Elbassioni}
\author[osijek]{Domagoj Matijevi\'c}
%\ead[url]{www.elsevier.com}

%\author[mysecondaryaddress]{Global Customer Service}\corref{mycorrespondingauthor}
%\cortext[mycorrespondingauthor]{Corresponding author}
%\ead{support@elsevier.com}

\address[osijek]{Department of Mathematics, University of Osijek, Croatia}
\address[lmu]{Gene Center, Ludwig-Maximilians-Universit\"a{}t M\"u{}nchen, 81377 Munich, Germany}
\address[uae]{Khalifa University of Science and Technology, Abu Dhabi, UAE}

\begin{abstract}
The well-studied Tai mapping between two rooted labeled trees $T_1(V_1, E_1)$ and $T_2(V_2, E_2)$ defines a one-to-one mapping between nodes in $T_1$ and $T_2$ that preserves ancestor relationship \cite{tai79}. 
For unordered trees the problem of finding a maximum-weight Tai mapping is known to be \mbox{NP-complete}~\cite{ZHANG1992133}. 
 In this work, we define an anti Tai mapping $M\subseteq V_1\times V_2$ as a binary relation between two unordered labeled trees such that 
any two $(x,y), (x', y')\in M$ violate ancestor relationship and thus cannot be part of the same Tai mapping,  
i.e. $(x\le x' \iff y\not \le y') \vee (x'\le x \iff y'\not \le y)$, given an ancestor order $x<x'$ meaning that $x$ is an ancestor of $x'$. Finding a maximum-weight anti Tai mapping arises in the cutting plane method for solving the maximum-weight Tai mapping problem via integer programming.
We give an efficient polynomial-time 
algorithm for finding a maximum-weight anti Tai mapping for the case when one  of the two trees is a path and 
further show how to 
extend this result in order to provide a polynomially computable  lower bound on the optimal anti Tai mapping for two unordered labeled trees. 
The latter result stems from the special class of anti Tai mapping defined by the more restricted condition $x\sim x' \iff y\not\sim y'$, where $\sim$ denotes 
 that two nodes belong to the same root-to-leaf path. For this class, we give an efficient algorithm that solves the problem directly on two unordered trees in $O(|V_1|^2|V_2|^2)$. 
\end{abstract}

\begin{keyword}
Tai mapping\sep Tree edit distance\sep unordered trees\sep clique constraints
\end{keyword}

\end{frontmatter}

%\linenumbers

\section{Introduction}
The systematic comparison of hierarchies is a fundamental task in a wide range of domains. Entities in biology often stand in a hierarchical relationship to one another (\cite{https://doi.org/10.1002/bimj.201100186}, \cite{Dutkowski2013}). A phylogenetic tree, for example, groups all descendants of a common ancestor and matching metrics are used in order to evaluate 
the dissimilarity between such trees  (\cite{bogdanovicz}). The most common way to measure the distance between trees is the tree edit distance (\cite{tai79}, \cite{YoshinoH17}). In this work we consider unordered labeled trees in which the order among siblings is irrelevant. 
The tree edit distance between trees is defined by the minimal-cost sequence of node edit operations (insert, delete, relabel) that transforms one tree into another. It is  shown in \cite{tai79} that there is a strong relationship between a Tai mapping and a sequence of editing operations, i.e. given a Tai mapping the tree edit distance can easily be computed.

\begin{figure}[t]
\centering
\begin{subfigure}{.45\textwidth}
\includegraphics[width=.93\linewidth]{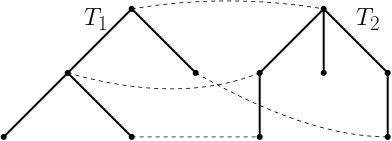}
\caption{Tai mapping.}
\label{fig:tai}
\end{subfigure}
\begin{subfigure}{.45\textwidth}
\includegraphics[width=.93\linewidth]{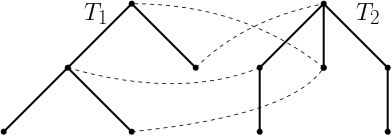}
\caption{Anti Tai mapping.}
\label{fig:antitai}
\end{subfigure}
\caption{Examples for Tai and anti Tai mapping. Dashed lines denote Tai and anti Tai edges. }
\label{fig:taiVSantitai}
\end{figure}

A Tai mapping defines a one-to-one node mapping between trees that preserves ancestor relationships (see Figure~\ref{fig:tai}). 
We use $<$ and $\le$ to denote ancestor orders, i.e. $x<x'$ if $x$ is an ancestor of $x'$, and $x\le x'$ if $x<x'$ or $x=x'$. 
Formally, for any two trees $T_1=(V_1, E_1), T_2(V_2, E_2)$, a Tai mapping is defined as a mapping 
$M\subseteq V_1\times V_2$ such that for any distinct $(x, y), (x', y')\in M$,  we have that 
\begin{equation}\label{tai}
( (x = x') \iff (y=y') ) \land ( (x < x') \iff (y < y') ).
\end{equation}
Equation~\ref{tai} combines  one-to-one and ancestor order property with the conjunction. 
%We assume the lexicographical order\mislav{what is the point of this lexicographic order? did you mean to write $\ne$ instead of $\le$ above?} on the Cartesian product $T_1\times T_2$, i.e. $(x,y)\le (x',y')$ if and only
%if $x<x'$ or ($x = x'$ and $y\le y'$).
%\stefan{do we need to mention both ways to define tai mapping?} \mislav{i think we should only keep the succint variant} 
Note that a Tai mapping
can equivalently be defined combining both properties, i.e. for any distinct $(x, y), (x', y')\in M$ we have that
\begin{equation}\label{tai2}
 (x \le x') \iff (y \le y').
\end{equation}
For a weight function $w:V_1\times V_2\to Z^+$ the weight of $M$ is defined as $w(M) := \sum_{(x, y)\in M} w(u, v)$. 
The problem of finding a maximum weight Tai mapping  
is known to be  NP-complete (\cite{ZHANG1992133}) and MAX SNP-hard (\cite{10.1016/0020-0190(94)90062-0}). 
Hence, different integer linear programming (ILP) based algorithms have been proposed to compute such a mapping (\cite{kondo}, \cite{trajan}, \cite{generalizedRF}, \cite{hong2017improved}).         
A naive ILP formulation~\cite{generalizedRF} contains a constraint for each pair of edges $(x,x'), (y, y')\in V_1\times V_2$ such (\ref{tai2}) does not hold, giving rise to $O(nm)$ variables and $O(n^2m^2)$ constraints, where $n$ and $m$ are the number of nodes of the two input trees, and thus does not allow to practically solve even moderate-sized instances (\cite{kondo}, \cite{generalizedRF}).
In \cite{hong2017improved} the authors therefore divide the problem 
into $O(nm)$ subproblems with $O(n+m)$ constraints each, by utilizing a dynamic programming approach from~\cite{Fukagawa2011}. 
In \cite{trajan}, on the other hand, the authors introduced two classes of valid inequalities of the Tai mapping polytope, crossing edges clique constraints and semi-independent clique constraints, and use them as cutting planes in a branch-and-bound scheme. Their implementation achieved a 13-fold speed-up compared to the naive ILP in experiments on perturbed human single-cell data. 
Here, we derive a class of valid inequalities that generalize both types of previously introduced clique constraints.  We formalize them as
special classes of \emph{anti Tai mappings} defined as follows. 

\paragraph{Anti Tai mapping} A mapping $M\subseteq V_1\times V_2$ that consists of edges for which (\ref{tai}) does not hold is called an anti Tai mapping (see Figure~\ref{fig:antitai}). More formally, an anti Tai mapping is defined as a binary relation\footnote{Note that anti Tai mapping is not really a mapping any more.
Hence, it should be understood as anti "Tai mapping".} 
$M\subseteq V_1\times V_2$ if for any distinct $(x, y), (x', y')\in M$ we have that
	\begin{equation}\label{antiTai}
	((x \le x') \iff (y \not\le y') ) \vee ((x' \le x) \iff (y' \not\le y) ).
	\end{equation}
Let $G=(V_1\times V_2,E)$ be a graph on vertex set $V_1\times V_2$ such that there is an edge $\{(x, y), (x', y')\}\in E$ if and only if (\ref{antiTai}) holds. Then a maximum-weight independent set in $G$ corresponds to a maximum-weight Tai mapping, while a maximum clique corresponds to a maximum-weight anti Tai mapping. 
Recall that the stable set polytope, denoted by STAB$(G)$, is the convex hull of stable (independent) sets of $G$ and that it is contained in the fractional stable set polytope
\begin{equation}\label{qstab}
\text{QSTAB}(G)=\{x\in\mathbb{R}^{V_1\times V_2}_+:~\sum_{v\in Q}x_v\le 1\text{ for all cliques }Q\text{ in }G\},
\end{equation}
see \cite{GLS93}. It is known that STAB$(G)$=QSTAB$(G)$ if and only if the graph $G$ is perfect, in which case linear optimization problems over STAB$(G)$ can be solved in polynomial time \cite{GLS93}. As the problem of finding a maximum-weight Tai mapping is NP-hard \cite{ZHANG1992133}, it follows that, unless P=NP, the graph $G$ defined above is not perfect and furthermore that STAB$(G)\subset$QSTAB$(G)$. Heuristically, one can still use the clique constraints in (\ref{qstab}) in a cutting plane algorithm for solving an integer program for the maximum-weight Tai mapping problem, if it is possible to efficiently solve the corresponding separation problem, which amounts to finding a maximum-weight anti Tai mapping. While it is not known whether such a separation problem is polynomially solvable in general, we manage to answer this question in the affirmative when one of the two trees  is a path.
Specifically, we construct a dynamic program to compute the maximum-weight anti Tai mapping for the case where one tree is a path in time $O(|V_1||V_2|)$. Furthermore, we define
a semi-independent antimatching (si-antimatching) as a restricted class of an anti Tai mapping and use this to provide a polynomially computable lower bound on the maximum-weight anti Tai mapping in the case of two trees. More precisely, 
let $x\sim x'$ denote that $x$ and $x'$ are on the same root-to-leaf path (``comparable''), i.e. $x\sim x'$ if $x\leq x'$ or $x'\leq x$.
Otherwise, we say that $x$ and $x'$ are incomparable. A mapping $M\subseteq V_1\times V_2$ is a si-antimatching if for any distinct
$(x,y), (x', y')\in M$:
\begin{equation}\label{si-antimatching}
(x \sim x') \iff (y \not\sim  y')
\end{equation}
Note that any pair $(x,y), (x', y')$ satisfying (\ref{si-antimatching}) also satisfies (\ref{antiTai}), and hence the set of all such pairs form a subgraph of the graph $G$ defined above. It follows that a maximum-weight si-antimatching can be used to provide a valid clique constraint for the Tai mapping problem. Motivated by this, we introduce a dynamic program that optimally solves the maximum-weight si-antimatching problem in time $O(|V_1|^2|V_2|^2)$.  We observe further that the same dynamic program can be combined with our optimal path-tree anti Tai mapping algorithm mentioned above (that is, when of the two trees is a path) to obtain a polynomially computable lower bound on the maximum-weight anti Tai mapping in the general  case.
%\mislav{this is not true; the optimal path-tree algorithm has nothing to do with the si-antimatching case. the optimal path-tree solution is actually directly inspired by (and is in fact a generalization of) our path-path crossing constraint dp from the recomb paper}

The rest of the paper is structured as follows. In Section \ref{s2}, we study the properties of an si-antimatching and show that it induces a partition of the matched vertices in each tree into a path and an independent set such that the path in one tree is mapped (by the antimatching) to the independent set in the other and vice versa. We use this structural result to derive a dynamic program for computing a maximum-weight si-antimatching. In Section~\ref{s3}, we 
 give a dynamic programming formulation for the anti Tai mapping problem between a tree and a path, and combine this result with our dynamic programming idea  from Section~\ref{s2} to give a dynamic program that provides a lower bound on the maximum-weight anti Tai mapping between two trees. We conclude in  Section \ref{s5}.

\section{Si-antimatching problem}\label{s2}

In this section, we present an efficient algorithm for this special case of anti Tai mapping. Surprisingly, si-antimatching turns out to be solvable in polynomial time and we provide a dynamic program that solves the problem directly on two unordered labeled trees $T_1$ and $T_2$ in time $O(|V_1|^2|V_2|^2)$. Our dynamic program strongly relies on the decomposition theorem (Theorem~\ref{decomposition_theorem}) that states that every si-antimatching can be decomposed into an antichain and a path in $T_1$, and a path and an antichain in $T_2$ such that an antichain in $T_1$ maps into a path in $T_2$, and a path in $T_1$ maps into an antichain in $T_2$ (see Figure~\ref{fig:decomposition}). We further argue that there exists an order in that decomposition which in turn provides the way of computing an entry in a dynamic table. 
\subsection{Decomposition theorem}
We will first show that si-antimatching can get nicely decomposed, which will form  
the basis for our algorithm.

\begin{df}
Let $T_1(V_1, E_1)$ and $T_2(V_2, E_2)$ denote two trees. We say that  any two
$(x,x'), (y, y') \in V_1\times V_2$ are semi-independent pair of edges (si-edges) if $x\sim x'\iff y\not\sim y'$ holds true. 
\end{df}
Let 
$M\subseteq V_1\times V_2$
be a set of semi-independent edges between trees $T_1$ and $T_2$. Since here we allow
edges with common vertices (i.e. it is not a matching set), we will refer to 
any such set $M$ as semi-independent antimatching or, in short, {\it si-antimatching}\footnote{
Typically in an antimatching, any pair of distinct edges have a common endpoint. 
Note that we use the notion of antimatching in a slightly different way, i.e. 
any pair of distinct edges could have a common endpoint.}.

For any $A\subseteq V_1 $ and {\it si-antimatching} $M$ let 
$M(A):=\{y\in V_2: x\in A \textrm{ and } (x, y)\in M\}$. If $A$ is a singleton, e.g. $A=\{x\}$,  we will often 
omit the set notation and write $M(\{x\}) = M(x)$. Analogously, for any $B\subseteq V_2$ 
let $M^{-1}(B):=\{ x\in V_1: y\in B \textrm{ and } (x, y)\in M \}$. 
Furthermore, let 
$V_1^M\subseteq V_1$ and $V_2^M\subseteq V_2$ denote all nodes in $V_1$ and $V_2$  that are incident to si-edges in $M$. 
We  say that root-to-leaf path $P_1$ in tree $T_1$ is {\it maximal} if no other root-to-leaf path contains more nodes of $V_1^M$ than $P_1$. Let $P_1^M=P_1\cap V_1^M$. 

\begin{figure}[t]
\centering
\includegraphics[width=0.7\textwidth]{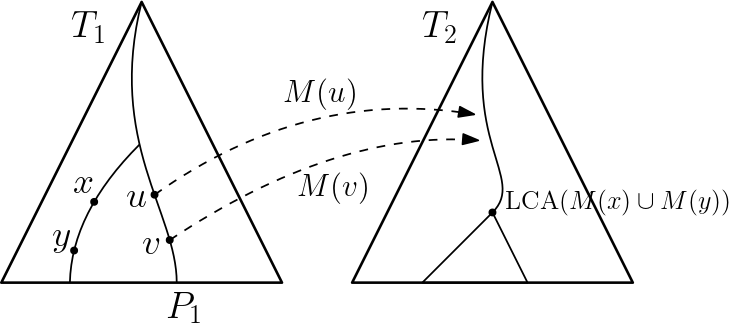}
\caption{Proof of Lemma~\ref{lm:antichain}.}
\label{fig:antichain}
\end{figure}

\begin{lm}\label{lm:antichain} Let $M$ denote an si-antimatching and let $P_1$ denote a maximal path in $T_1$. Then $V_1^M\setminus P_1^M$ is an antichain, i.e. 
any two distinct nodes in $V_1^M\setminus P_1^M$ are incomparable. 
\end{lm}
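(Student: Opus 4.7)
My plan is to argue by contradiction. Suppose there are distinct $x, x' \in V_1^M \setminus P_1^M$ with $x \sim x'$; after swapping if needed, I may assume $x < x'$. Let $(x,y), (x',y') \in M$ be the corresponding si-edges. Since $x \sim x'$, the si-antimatching condition immediately gives $y \not\sim y'$ in $T_2$, a fact I will use repeatedly.

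Next I would partition $P_1^M$ according to its relation with $x$. Since $x \notin P_1$, no node of $P_1$ can be a descendant of $x$ (otherwise the root-to-leaf path $P_1$ would pass through $x$). Hence every $z \in P_1^M$ is either an ancestor of $x$ (call this set $A$) or incomparable with $x$ (call this set $B$). The same dichotomy holds with respect to $x'$: for $z \in A$ we have $z < x'$, and for $z \in B$ we have $z \not\sim x'$, because $x$ lies on every root-to-$x'$ path and so any $z$ comparable to $x'$ would also be comparable to $x$.

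The crux, and what I expect to be the main technical obstacle, is to show $|B| \le 1$. For each $z \in B$ with partner $w$, applying si-antimatching to $(z,w)$ against both $(x,y)$ and $(x',y')$ yields $w \sim y$ and $w \sim y'$. Combined with $y \not\sim y'$, a routine check of the four $\le$-combinations rules out everything except $w \le y$ and $w \le y'$, so $w$ must be a common ancestor of $y$ and $y'$. Consequently every such partner $w$ lies on the root-to-$\mathrm{lca}(y,y')$ path in $T_2$ and any two such partners are comparable. But then si-antimatching applied to two distinct pairs $(z_1,w_1),(z_2,w_2)$ with $z_i \in B$ forces $z_1 \not\sim z_2$, which is impossible because $z_1$ and $z_2$ both lie on the root-to-leaf path $P_1$. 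Therefore $|B| \le 1$.

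To derive the final contradiction, I pick any root-to-leaf path $P_1'$ through $x'$. Since $x < x'$, this $P_1'$ contains both $x$ and $x'$, and it contains every element of $A$ (any ancestor of $x$ appears on every path through $x'$). These three contributions are disjoint, because $A \subseteq P_1$ while $x,x' \notin P_1$ and $x \neq x'$, so $|P_1' \cap V_1^M| \ge |A| + 2 > |A| + |B| = |P_1^M|$, contradicting the maximality of $P_1$. This completes the argument.
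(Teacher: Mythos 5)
Your proof is correct and uses essentially the same argument as the paper: the key step in both is that any node of $P_1^M$ incomparable with both $x$ and $x'$ must have its image be a common ancestor of $y$ and $y'$ in $T_2$, forcing such images to be pairwise comparable and hence (by the si-antimatching condition) at most one such node to exist, which clashes with the maximality of $P_1$ via the same counting (the path through $x'$ picks up $A\cup\{x,x'\}$). You merely run the contradiction in the contrapositive order and fill in the counting details that the paper leaves terse.
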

\begin{proof}
Suppose the opposite, i.e. there exist $x, y\in V_1^M\setminus P_1^M$, $x\not = y$, 
such that $x\sim y$ (see Figure~\ref{fig:antichain}). Since $P_1$ is maximal, there must exist  
$u, v\in P_1^M $ such that $u\not\sim x$, $u\not\sim y$ and $v\not\sim x$, $v\not\sim y$. 
Thus, nodes $M(u)$ and $M(v)$ must be comparable with nodes in $M(x)$ and $M(y)$. Hence, 
it must be that $M(u), M(v) \subseteq [\textrm{root}(T_2), \textrm{LCA}(M(x)\cup M(y))] $, where the 
notation $[\cdot, \cdot]$ denotes a path in a tree with a given start and end node, while LCA stands for a lowest common ancestor of a given set of nodes. Which is a contradiction with the fact that $M(u)$ and $M(v)$ 
cannot lie on a common path in $T_2$. 
\end{proof}
\begin{lm}\label{lm:antichain2path}
Let $M$ be si-antimatching and $A\subseteq V_1^M$ be an antichain. Then either $M(A)$ or $M(A\setminus \{x\})$, for some $x\in A$, lie on a single root-to-leaf path in $T_2$. 
\end{lm}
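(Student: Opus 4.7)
My plan is to exploit condition~\eqref{si-antimatching} in two different flavors in order to control where $M(A)$ sits inside $T_2$, and then to split the argument according to whether any element of $A$ is matched to more than one node.

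First, I would unpack \eqref{si-antimatching} in two useful ways. Two distinct edges $(x,y),(x,y')\in M$ share the left endpoint, and $x\sim x$ holds trivially, so \eqref{si-antimatching} forces $y\not\sim y'$; hence $M(x)$ is an antichain in $T_2$ for every $x\in V_1^M$. On the other hand, for distinct $x,x'\in A$ the antichain property of $A$ gives $x\not\sim x'$, so every $y\in M(x)$ and $y'\in M(x')$ must be comparable, i.e.\ $y\sim y'$.

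Next, I would case-split on whether some $x\in A$ has $|M(x)|\ge 2$. If every $x\in A$ is matched to a single node of $T_2$, the second observation above directly yields that $M(A)$ is a chain in $T_2$ and thus lies on a root-to-leaf path, and we are done with $M(A)$ itself. Otherwise, pick $x\in A$ with two distinct $y_1,y_2\in M(x)$, which by the first observation are incomparable; set $z := \textrm{LCA}(y_1,y_2)$. For every other $x'\in A$ and every $y'\in M(x')$, the constraints $y'\sim y_1$ and $y'\sim y_2$ combined with $y_1\not\sim y_2$ must force $y'\le z$: indeed, $y_1\le y'$ would place $y'$ in the subtree rooted at $y_1$ and violate $y'\sim y_2$, and symmetrically $y_2\le y'$ is ruled out, so $y'$ must be a common ancestor of $y_1$ and $y_2$, hence $y'\le z$. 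Therefore $M(A\setminus\{x\})$ consists of ancestors of $z$, which forms a chain and extends to a single root-to-leaf path in $T_2$.

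The only real obstacle is the small case analysis in the second paragraph that pins down $y'\le z$; everything else amounts to reading off the two instantiations of the si-antimatching equivalence. As a side remark (not needed for the statement but reassuring for uniqueness of the bad $x$), at most one $x\in A$ can have $|M(x)|\ge 2$, since a second such $x''\ne x$ would force the two incomparable nodes of $M(x'')$ to both be ancestors of $z$, contradicting the fact that ancestors of $z$ form a chain.
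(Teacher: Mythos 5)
Your proof is correct and follows essentially the same route as the paper's: both arguments rest on the two instantiations of the si-antimatching condition (images of a single vertex are pairwise incomparable, images of distinct antichain vertices are pairwise comparable) and on the fact that any node comparable to two incomparable nodes $y_1,y_2$ must lie on $[\textrm{root}(T_2),\textrm{LCA}(y_1,y_2)]$. The paper merely enters the argument from the other end, taking an incomparable pair $u,v\in M(A)$ and deducing that they share a unique preimage $x$, whereas you locate $x$ first via the case split on $|M(x)|$; the content is the same.
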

\begin{proof}
Suppose nodes in $M(A)$ do not lie on a path, i.e. there exist $u, v \in M(A)$ such that $u\not\sim v$. 
Note that $M^{-1}(\{u, v\})\cap A$ contains only a single node. Let $x$ denote that unique node. Since 
$x\not\sim y$ for all $y\in A\setminus \{x\}$ it follows that $u\sim z$ and $v\sim z$ for all 
$z\in M(A\setminus \{x\})$. Thus, all nodes in $M(A\setminus \{x\})$ must lie on the path
$[root(T_2), LCA({u, v})]$. 
\end{proof}

\ignore{
\begin{figure}[t]
\centering
\includegraphics[width=0.7\textwidth]{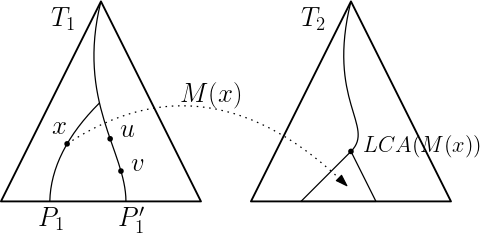}
\caption{Here comes the description of the image}
\label{fig:antichain2path}
\end{figure}
}
\begin{lm}\label{lm2:antichain2path}
Let $M$ denote si-antimatching. Then there exist a maximal path $P_1$ such that all nodes in 
$M(V_1^M\setminus P_1^M)$ lie on a root-to-leaf path in $T_2$. 
\end{lm}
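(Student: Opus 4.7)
The plan is to start from an arbitrary maximal path $P_1$ in $T_1$, apply Lemmas \ref{lm:antichain} and \ref{lm:antichain2path} to $A := V_1^M \setminus P_1^M$, and, whenever the conclusion fails for this $P_1$, perform a single swap to produce a maximal path $P_1'$ for which it succeeds.

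By Lemma \ref{lm:antichain}, $A$ is an antichain, so Lemma \ref{lm:antichain2path} either hands me $M(A)$ on a root-to-leaf path (in which case $P_1$ works) or a node $x \in A$ with $M(A \setminus \{x\})$ on a path. Assume the latter. From the proof of Lemma \ref{lm:antichain2path} together with the si-antimatching definition I will first extract three properties of $x$: (i) $M(x)$ contains two incomparable nodes, so $|M(x)| \ge 2$; (ii) $M(x)$ is an antichain in $T_2$ (two distinct images of the same vertex under a si-antimatching must be incomparable); and (iii) for every $y \in V_1^M$ with $y \not\sim x$, $M(y) \subseteq Q_x := [\mathrm{root}(T_2),\,\mathrm{LCA}(M(x))]$. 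Property (iii) follows because the si-antimatching condition forces each $m \in M(y)$ to be comparable with every node of the antichain $M(x)$, and the only way a tree node can be comparable with two incomparable nodes is to be their common ancestor.

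Next I will switch to the root-to-leaf path $P_1'$ through $x$ and let $a$ be the deepest node in $P_1 \cap P_1'$. Any $V_1^M$ node on $P_1'$ strictly below $a$ other than $x$ would be comparable with $x$ and lie in the antichain $A$, a contradiction; hence $(P_1')^M = \{x\} \cup (P_1^M \cap \mathrm{anc}(a))$. Setting $B := P_1^M \setminus \mathrm{anc}(a)$, maximality of $P_1$ forces $|B| \ge 1$. The crux of the whole argument — and what I expect to be the main obstacle — is the reverse inequality $|B| \le 1$. For any two distinct $b, b' \in B$, both lie on $P_1$ below $a$, so they are comparable in $T_1$ yet each is incomparable with $x$ (they sit on the $P_1$-side of $a$, while $x$ sits on the $P_1'$-side). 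Property (iii) then places $M(b), M(b') \subseteq Q_x$, a chain in $T_2$, so every pair of their images is comparable; but comparability of $b, b'$ in $T_1$ forces those image pairs to be pointwise incomparable by the si-antimatching axiom — a contradiction. Thus $|B| = 1$, and $P_1'$ is maximal.

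Writing $B = \{b\}$, the new non-path set $A' := V_1^M \setminus (P_1')^M = (A \setminus \{x\}) \cup \{b\}$ consists entirely of vertices incomparable with $x$ — the elements of $A \setminus \{x\}$ by the antichain property of $A$, and $b$ by the reasoning just used. Applying property (iii) to every element of $A'$ gives $M(A') \subseteq Q_x$, and extending $Q_x$ arbitrarily to a root-to-leaf path of $T_2$ shows that $P_1'$ satisfies the conclusion of the lemma.
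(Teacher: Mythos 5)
Your proposal is correct and follows essentially the same route as the paper: start from an arbitrary maximal path, isolate the unique "bad" vertex $x$ whose image contains incomparable nodes, re-route the path through $x$, and establish maximality of the new path via the same contradiction (two vertices of the old path that are mutually comparable but incomparable with $x$ would both map into the chain $[\mathrm{root}(T_2),\mathrm{LCA}(M(x))]$, violating the si-antimatching condition). The only cosmetic difference is that you finish by applying your property (iii) directly to all of $A'$ rather than re-invoking Lemma~\ref{lm:antichain2path} on the enlarged antichain, which is a slightly more explicit but equivalent conclusion.
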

\begin{proof}
Let $P_1'$ be an arbitrary maximal path and $A_1 = V_1^M\setminus P_1'$ an antichain 
(by Lemma~\ref{lm:antichain}). Suppose not all nodes in $M(A_1)$ lie on a path, i.e. there exist $x\in A_1$
such that nodes in $M(x)$ do not all lie on a single path in $T_2$.
Then the path $P_1$ containing $x$ is also maximal since otherwise there would exist 
$u,v\in P_1'\cap V_1^M$, $u\neq v$, $u \not\sim x$ and $v\not\sim x$, such that 
 $M(u), M(v)\subseteq [root(T_2), LCA(M(x))]$, which is a contradiction to the fact that $u\sim v$. 
Hence, there exists a single node $u\in P_1'\cap V_1^M$ such that $u\not\sim z$ for all $z\in A_1$. 
Therefore, $A_1\cup \{u\}$ is also an antichain and by Lemma~\ref{lm:antichain2path} it follows that
$M(A_1\cup\{u\}\setminus \{x\})$ is a path and the claim of the lemma follows. 
\end{proof}

\begin{figure}[t]
\centering
\includegraphics[width=0.7\textwidth]{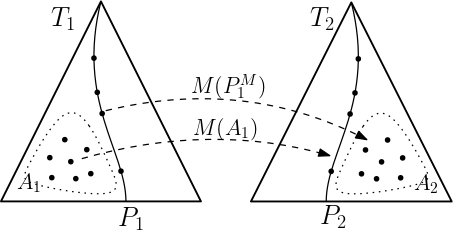}
\caption{Illustration of the decomposition theorem. Nodes on the path $P_1$ will map to the antichain $A_2$, while the antichain $A_1$ will map to nodes on the path $P_2$.}
\label{fig:decomposition}
\end{figure}

Now we are ready to state the main result of this section that immediately follows from previous lemmas: 
\begin{tm}[Decomposition theorem]\label{decomposition_theorem}
For any si-antimatching $M$ there exists a partition of $V_1^M$ into sets $P_1^M = P_1\cap V_1^M$, 
for some root-to-leaf path $P_1$, and $A_1 = V_1^M\setminus P_1$ an antichain, such that $A_2 = M(P_1^M)$ 
is an antichain in $T_2$ and $P_2^M = M(A_1)$ all lie on some root-to-leaf path $P_2$ in $T_2$. 
\end{tm}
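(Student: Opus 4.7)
The plan is to obtain the decomposition by stringing together the three preceding lemmas, with one small extra observation for the symmetric side. First, I would invoke Lemma~\ref{lm2:antichain2path} to fix a maximal root-to-leaf path $P_1$ in $T_1$ with the property that $M(V_1^M\setminus P_1^M)$ is contained in a single root-to-leaf path of $T_2$; call such a containing path $P_2$. Setting $P_1^M := P_1\cap V_1^M$ and $A_1 := V_1^M\setminus P_1^M$ produces the required partition of $V_1^M$ by construction, and Lemma~\ref{lm:antichain} immediately certifies that $A_1$ is an antichain in $T_1$. With this choice, $P_2^M := M(A_1)$ lies on the root-to-leaf path $P_2$, which is exactly the second half of the claim.

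The remaining obligation is to show that $A_2 := M(P_1^M)$ is an antichain in $T_2$. This step does not actually need any of the three lemmas; it follows directly from the defining property~(\ref{si-antimatching}) of an si-antimatching. Take any two distinct $y, y'\in A_2$ and pick witnesses $x, x'\in P_1^M$ with $(x,y),(x',y')\in M$. Whether $x=x'$ or $x\neq x'$, both witnesses lie on the single root-to-leaf path $P_1$, so $x\sim x'$. Hence~(\ref{si-antimatching}) forces $y\not\sim y'$, and since this holds for every pair, $A_2$ is an antichain. Combined with the previous paragraph, this yields the full decomposition asserted by the theorem.

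The main obstacle is, as expected, concentrated in Lemma~\ref{lm2:antichain2path}, which guarantees the existence of a maximal path in $T_1$ whose ``complement image'' collapses onto a single path in $T_2$; once that lemma is in hand, the decomposition theorem amounts to little more than bookkeeping. The conceptual reason the assembly is so painless is the symmetry already baked into~(\ref{si-antimatching}): being on a common root-to-leaf path is itself a comparability relation, so the si-antimatching condition converts ``lies on the path $P_1$'' on the $T_1$ side directly into ``pairwise incomparable'' on the $T_2$ side, producing the antichain $A_2$ for free and mirroring the path/antichain roles across the two trees, as in Figure~\ref{fig:decomposition}.
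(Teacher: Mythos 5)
Your proposal is correct and matches the paper's (implicit) argument: the paper gives no written proof, asserting the theorem ``immediately follows from previous lemmas,'' and your assembly --- Lemma~\ref{lm2:antichain2path} to choose $P_1$ and $P_2$, Lemma~\ref{lm:antichain} for $A_1$, and the direct appeal to~(\ref{si-antimatching}) to show $A_2=M(P_1^M)$ is an antichain since any two witnesses on $P_1$ are comparable --- is exactly the intended bookkeeping. No issues.
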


\subsection{Dynamic-Programming}
In the following, we will explain our algorithm for the si-antimatching problem based on the decomposition theorem~\ref{decomposition_theorem}. 
Given si-antimatching $M$ and for $x\in A_1$ let $l(x)\in P_1$ denote the lowest ancestor of $x$ in $P_1$. Analogously denote for $x\in A_2$. We start by partitioning antichains into equivalence classes. 
\begin{df}
Partition $A_1$ into equivalence classes $\mathcal{B} = \{B_1, B_2, \ldots, B_n\}$ such that $x, y\in A_1$ belong 
to the same equivalence class if $l(x) = l(y)$. Analogously partition $A_2$ into equivalence classes 
$\mathcal{C} = \{C_1, C_2, \ldots, C_m\}$. 
\end{df}
Without loss of generality we will assume that $l(b_1)<l(b_2) < \ldots < l(b_n)$, for $b_1\in B_1, \ldots, b_n\in B_n$, and similarly $l(c_1)<l(c_2)<\ldots < l(c_m)$, for $c_1\in C_1, \ldots, c_m\in C_m$ (see Figure~\ref{fig:partition}).
Furthermore, for $x\in P_1^M$ let $L(x) = \{l(c_j): C_j\cap M(x) \neq \emptyset \}$.
Given the above decomposition of our problem, with the following theorem we show that we have an order that would allow for dynamic programming to be used. 
\begin{tm}\label{thm2} Let $M$ denote si-antimatching, $P_1^M, A_1$ and $P_2^M, A_2$ partitioning of $V_1^M$ and 
$V_2^M$, respectively, and $\mathcal{B}$ and $\mathcal{C}$ equivalence classes of $A_1$
and $A_2$, respectively. Then for  $x\in P_1^M$
$$
l(b_i) < x  \iff u\leq v,
$$
for all $u\in M(B_i)$  and $v\in L(x)$. 
\end{tm}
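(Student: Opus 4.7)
The plan is to prove the biconditional by translating the hypothesis about $l(b_i)$ and $x$ on $P_1$ into a statement about whether $b_i$ and $x$ are comparable in $T_1$, applying the defining si-antimatching property to transfer this into a comparability statement in $T_2$ between $u\in M(b_i)$ and $v'\in M(x)$, and finally using the decomposition theorem to sharpen this undirected relation into the directed inequality $u\le v$. Note first that the inequality $u\le v$ is well posed on a single path: $u\in M(B_i)\subseteq M(A_1)=P_2^M\subseteq P_2$, and $v=l(c_j)\in P_2$ by the definition of $l$.

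For the forward direction, assume $l(b_i)<x$. Any $b\in B_i$ satisfies $l(b)=l(b_i)$ and $b\notin P_1$, so $b$ hangs off $P_1$ at $l(b_i)$ while $x$ lies on $P_1$ strictly below $l(b_i)$; hence $b\not\sim x$ in $T_1$. For any $u\in M(b)$ and $v'\in M(x)$, the edges $(b,u)$ and $(x,v')$ are distinct (since $A_1\cap P_1^M=\emptyset$), so the si-antimatching property yields $u\sim v'$. Because $u\in P_2$ and $v'\in A_2$ with $A_2\cap P_2=\emptyset$ (from the decomposition theorem), $v'$ cannot be an ancestor of $u$, since it would then lie on $P_2$; this forces $u\le v'$ in $T_2$. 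Since $v=l(c_j)$ (with $v'\in C_j$) is the \emph{lowest} ancestor of $v'$ on $P_2$, any $P_2$-ancestor of $v'$, in particular $u$, must satisfy $u\le v$.

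Conversely, I would argue by contrapositive: assume $l(b_i)\ge x$. Then $x\le l(b_i)<b_i$ (the strict inequality because $b_i\notin P_1$), so $x<b_i$ and $b_i\sim x$ in $T_1$. Si-antimatching then forces $u\not\sim v'$ for the corresponding $u\in M(b_i)$ and $v'\in M(x)$. But $v=l(c_j)$ is an ancestor of $v'$, so $u\le v$ would imply $u\le v'$, contradicting $u\not\sim v'$.

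The main obstacle is the sharpening step in the forward direction: turning the undirected comparability $u\sim v'$ supplied by si-antimatching into the directed inequality $u\le v$. This is exactly where the decomposition theorem is indispensable: the disjointness $A_2\cap P_2=\emptyset$ rules out the case $v'<u$ and forces $u$ to be a $P_2$-ancestor of $v'$, hence at or above the lowest such ancestor $v=l(c_j)$.
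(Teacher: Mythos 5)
Your proof is correct and takes essentially the same approach as the paper: both directions reduce to translating $l(b_i)<x$ into (in)comparability of $B_i$ with $x$ in $T_1$, transferring this via the si-antimatching condition to the images in $T_2$, and using the path/antichain structure to convert comparability with $v'\in M(x)$ into the inequality against $v=l(c_j)$. The only differences are organizational (direct plus contrapositive rather than the paper's two contradictions) and that you are more explicit about why $u\sim v'$ sharpens to $u\le v'$, a step the paper leaves implicit.
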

\begin{proof}
$[if]$ Let's  assume that for some $x\in P_1^M$, there exists $B_i$ such that 
$u\le v$ for all $u\in M(B_i), v\in L(x)$, and  $l(b_i)\ge x$. That implies that there must exist some $y\in B_i$ such that 
$z\in M(y), w\in M(x)$ and $z\sim w$. Note that it holds that $x\sim y$, since we assumed that $l(b_1)\ge x$, which implies a contradiction with the fact that $M$ is si-antimatching. \\
$[only\mbox{ } if]$ Let $l(b_i)<x$ for some $B_i$ and $u>v$ for all $u\in M(B_i), v\in L(x)$. But then there 
exists $y\in B_i$ such that for $z\in M(y)$, $w\in M(x)$ we have that $z\not\sim w$. Note
that it holds that  $x\not\sim y$, since we assumed that $l(b_i) < x$, which again implies a contradiction with 
the fact that $M$ is si-antimatching. 
\end{proof}

% Similarly we can show the following.
% \begin{tm}\label{thm3} Let $M$ denote si-antimatching, $P_1^M, A_1$ and $P_2^M, A_2$ partitioning of $V_1^M$ and 
% $V_2^M$, respectively, and $\mathcal{B}$ and $\mathcal{C}$ equivalence classes of $A_1$
% and $A_2$, respectively. Then for  $x,y\in P_1^M$
% $$
% x \le y  \iff u\leq v,
% $$
% for all $u\in L(x)$  and $v\in L(y)$. 
% \end{tm}

\begin{figure}[t]
\centering
\includegraphics[width=0.6\textwidth]{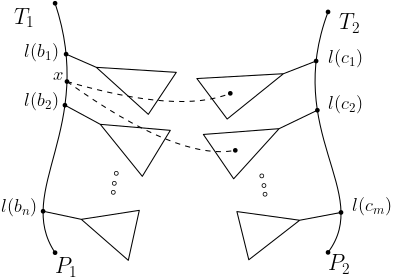}
\caption{Illustration of a path and associated antichains (equivalence classes) attached to it in both trees. }
\label{fig:partition}
\end{figure}

We now see that any si-antimatching has the structure illustrated in Figure~\ref{fig:dp}. Particularly, if we separate elements of $P_1^M$ into equivalence classes of relation $\{(p_i,p_j) : p_i \le l(b_k) \iff p_j \le l(b_k), \forall k=1,...,n\}$ then for $x,y$ from distinct equivalence classes we have $
x < y  \iff u< v,
$
for all $u\in L(x)$  and $v\in L(y)$. Analogous claim holds for $P_2^M$. In other words, images of distinct equivalence classes under $M$ are not interwoven, allowing for a dynamic programming approach to be used.

\begin{figure}[t]
\centering
\includegraphics[width=0.5\textwidth]{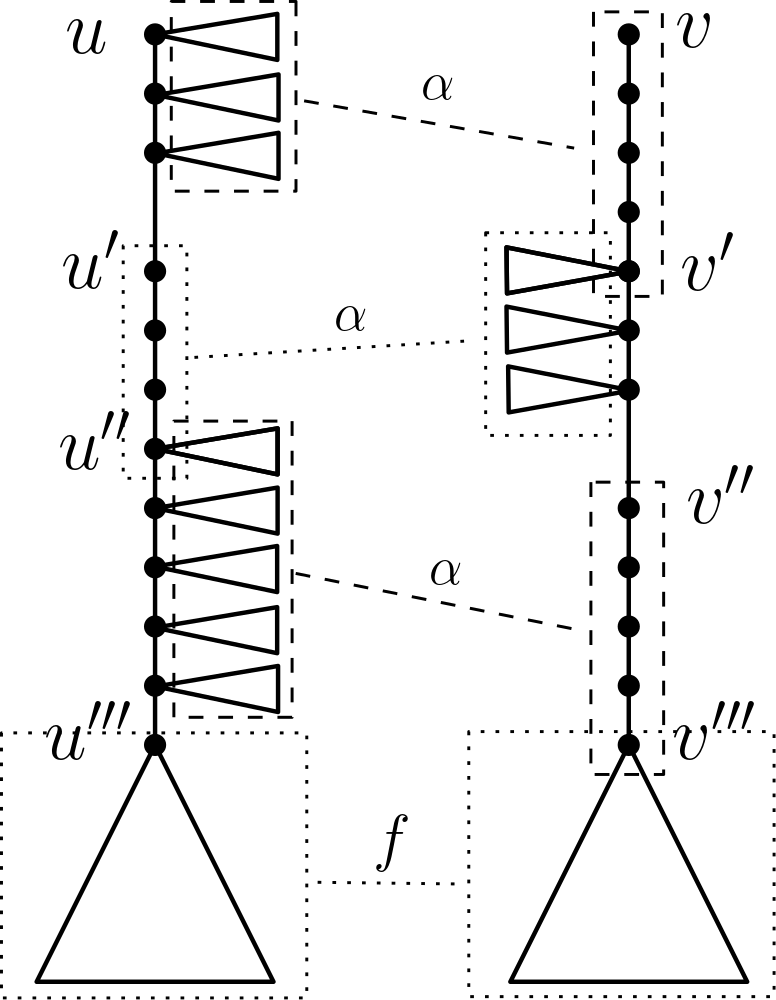}
\caption{An si-antimatching is obtained by alternating between selecting a subpath and a family of subtrees along the paths from the decomposition theorem.}
\label{fig:dp}
\end{figure}

\begin{df}
  We will denote the set of vertices of the subtree rooted in the vertex $x$ with $\tau(x)$, the set of children of the vertex $x$ with $c(x)$, the root of a tree $T$ with $r(T)$ and the unique parent of vertex $x\ne r(T)$ with $p(x)$.
\end{df}

Before stating the main result, note the following.

\begin{lm}\label{wlogl}
Let $P_1^M=\{p_1,...,p_k\}$, $P_2^M=\{q_1,...,q_r\}$, w.l.o.g. $p_1<p_2<...<p_k$, $q_1<q_2<...<q_r$.
Then either $l(b_1)<p_1$ or $l(c_1)<q_1$.
\proof
Assume both $l(b_1)\ge p_1$ and $l(c_1)\ge q_1$. Then $M(p_1)\subseteq \tau(q_1)$ and $M^{-1}(q_1)\subseteq \tau(p_1)$ which is a contradiction with the assumption that $M$ is an si-antimatching.
\qed
\end{lm}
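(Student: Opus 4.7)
The plan is to proceed by contradiction: assume simultaneously $l(b_1) \geq p_1$ and $l(c_1) \geq q_1$, and exhibit two distinct edges of $M$ whose endpoints are pairwise comparable in both trees, violating the si-antimatching condition (\ref{si-antimatching}).

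The first step is to translate each assumption into a subtree containment. Because the classes are ordered so that $l(c_1)$ is the highest of the $l(c_j)$'s on $P_2$, any $v \in A_2$ that lies in class $C_j$ satisfies $q_1 \leq l(c_1) \leq l(c_j) \leq v$, so $v \in \tau(q_1)$; hence $M(p_1) \subseteq A_2 \subseteq \tau(q_1)$. A symmetric argument, combined with the fact that $A_1$ is disjoint from the path $P_1$, yields that every $a \in A_1$ is a strict descendant of $p_1$, i.e., $a \in \tau(p_1) \setminus \{p_1\}$.

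To derive the contradiction, I would pick any $v \in M(p_1)$, which is nonempty because $p_1 \in P_1^M$, and any $a \in A_1$ with $(a, q_1) \in M$, which exists because $q_1 \in P_2^M = M(A_1)$ by the decomposition theorem (Theorem~\ref{decomposition_theorem}). Since $a \in A_1$ is disjoint from $P_1 \ni p_1$, the edges $(p_1, v)$ and $(a, q_1)$ are distinct. The containments established above give $p_1 < a$, hence $p_1 \sim a$, and $v \geq q_1$, hence $v \sim q_1$; the two halves of the equivalence in (\ref{si-antimatching}) therefore disagree on this pair of edges, a contradiction. The only delicate point is to draw the second edge from $A_1$ rather than from $M^{-1}(q_1)$ directly, which sidesteps any concern about whether $q_1$ happens to lie in $A_2 \cap P_2^M$ or whether $M^{-1}(q_1)$ contains $p_1$ itself.
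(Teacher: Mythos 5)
Your proof is correct and follows essentially the same route as the paper: the paper's own argument is exactly the two containments $M(p_1)\subseteq\tau(q_1)$ and $M^{-1}(q_1)\subseteq\tau(p_1)$ followed by the (unstated) extraction of two comparable-comparable edges contradicting (\ref{si-antimatching}). You merely fill in the details the paper leaves implicit, including the worthwhile observation that taking the second edge from $A_1$ guarantees the two edges are distinct.
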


\begin{tm}\label{thmdp}
    Let $\alpha(x, v, v')$ be the weight of maximum weight antichain in the subtree rooted at vertex $x$ with weight $\sum_{z\in [v,v']}w(y, z)$ on vertex $y\in \tau (x)$. Then,

  $\displaystyle\max_{M \text{si-antimatching}}w(M)=\max \begin{cases}f(r(T_1), r(T_2)) \\ f(r(T_2), r(T_1))\end{cases}$, where
  $$f(u,v)=\max\begin{cases}\displaystyle\max_{\substack{u'\in \tau (u)\setminus \{u\} \\ v'\in \tau (v)\setminus \{v\}}}\left[ f(v', u') + \sum_{x\in [u,p(u')]} \sum_{y\in c(x) \setminus [u,u']} \alpha(y, v, v')\right] \\ \displaystyle\max_{v'\in \tau(v)} \alpha (u, v, v')\end{cases}$$
  \proof
  Let $u'\in \tau(u)\setminus \{u\}$ and $v'\in \tau(v)\setminus \{v\}$. The union $\mathcal{A}$ of antichains of any family of subtrees rooted in $c([u,p(u')])\setminus \{u'\}$, is an antichain and so any subset of $\mathcal{A}\times [v,v']$ is an si-antimatching. Since we have $x\not\sim y$ for all $x\in\tau(u')$, $y\in \mathcal{A}$ and $x\sim y$ for all $x\in [v,v']$, $y\in \tau (v')$, inductively any set found by the dynamic program is an si-antimatching. Conversely, let $M$ be a non-empty si-antimatching.
If $P_1^M=\emptyset$, we have $w(M)\le \alpha (r(T_1), q_1, q_r)$, where $q_1$ and $q_r$ are respectively the least and the greatest element of $P_2^M$. If $P_1^M = \{p_1,...,p_k\}$, w.l.o.g. $p_1<p_2<...<p_k$. By lemma \ref{wlogl} it is not a loss of generality to assume $l(b_1)<p_1$. Partition $M=G\cup F$ where $G=M\cap((V_1 \setminus \tau(p_1))\times M(V_1 \setminus \tau(p_1)))$ and $F=M\setminus G$. We have $w(G) \le \sum_{x\in [r(T_1), p(p_1)]} \sum_{y\in c(x)\setminus [r(T_1),p_1]} \alpha (y, q_1, q_r)$ where $q_1$ and $q_r$ are respectively the least and the greatest element of $M(G)$. By proceeding analogously with the subtrees rooted in $q_r$ and $p_1$ and the si-antimatching $F$, we obtain $w(F)\le f(q_r, p_1)$.

  \qed
\end{tm}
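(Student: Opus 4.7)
My plan is to establish the theorem via a two-direction argument: a soundness direction that every set constructed by the recursion is an si-antimatching, and a completeness direction that every si-antimatching is captured by some execution of the recursion, so that the DP correctly computes the maximum weight.

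For soundness, I would induct on the recursion depth of $f$. In the base case $\alpha(u,v,v')$, the constructed relation pairs each element $y$ of a maximum-weight antichain in $\tau(u)$ with vertices of the subpath $[v,v']$; two distinct antichain elements are incomparable in $T_1$ while two path vertices are comparable in $T_2$, so the si-antimatching condition holds. In the recursive case, the freshly added pairs form a subset of $\mathcal{A}\times [v,v']$, where $\mathcal{A}$ is the union of antichains coming from subtrees rooted at $c([u,p(u')])\setminus [u,u']$; this union is again an antichain because children of distinct ancestors on a common path are incomparable and off-path children are incomparable with on-path vertices. The compatibility that needs checking is between these new pairs and those produced by $f(v',u')$: every vertex of $\mathcal{A}$ lies strictly outside $\tau(u')$ and is hence incomparable with every $T_1$-vertex touched inside $\tau(u')$, while every vertex of $[v,v']$ lies weakly above $v'$ and is hence comparable with every $T_2$-vertex touched inside $\tau(v')$. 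Together these two observations give the iff across recursive layers, closing the induction.

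For completeness I would use the decomposition theorem together with Theorem \ref{thm2}. Given any si-antimatching $M$, Theorem \ref{decomposition_theorem} yields the partition $V_1^M = P_1^M \cup A_1$ and $V_2^M = P_2^M \cup A_2$. The outer $\max$ between $f(r(T_1),r(T_2))$ and $f(r(T_2),r(T_1))$ picks the correct orientation, and Lemma \ref{wlogl} lets me assume $l(b_1)<p_1$. I would peel off the top slice of $M$ by choosing $u' := p_1$ and letting $v'$ be the deepest vertex of $P_2^M$ that is the image of some antichain vertex $a \in A_1$ with $l(a)<p_1$; then split $M = G \cup F$, where $G$ collects the pairs whose $T_1$-side lies outside $\tau(p_1)$. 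The weight of $G$ is exactly the sum of $\alpha(y,v,v')$ over off-path children $y$ of $[u,p(u')]$ (by Lemma \ref{lm:antichain} applied inside each such subtree), and Theorem \ref{thm2} guarantees that the remaining pairs in $F$ lie inside $\tau(p_1)\times \tau(v')$ with the roles of the two trees swapped, matching the recursive call $f(v',u')$ exactly.

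The main obstacle in turning this plan into a polished proof is simultaneous bookkeeping of two things: the swap of roles in the recursive call $f(v',u')$, and the non-interleaving of subsequent antichain/path layers with the top slice. The key technical point is that once we commit to mapping the top $T_1$-antichain slice to the subpath $[v,v']$ in $T_2$, Theorem \ref{thm2} forces every deeper $B_i$ class of $A_1$ to map below $v'$ and every deeper subpath extension of $P_1^M$ to live inside $\tau(p_1)$, so the peeling is well-defined and the recursion never revisits already-committed vertices. Once this ordered non-interleaving is made explicit, the induction for both directions is clean and the equality between the DP value and $\max_{M}w(M)$ follows.
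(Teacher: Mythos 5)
Your proposal follows essentially the same route as the paper's own proof: soundness by induction on the recursion (new pairs form a subset of $\mathcal{A}\times[v,v']$ compatible with the role-swapped recursive call $f(v',u')$), and completeness by invoking the decomposition theorem and Lemma \ref{wlogl}, then splitting $M$ into the part outside $\tau(p_1)$ (bounded by the $\alpha$ terms) and the remainder (bounded by the recursive call). Your version is somewhat more explicit about the non-interleaving guaranteed by Theorem \ref{thm2} and about the choice of $v'$ as the greatest element of $M(G)$, but the argument is the same.
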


It should be noted that it is possible to implement the algorithm from theorem \ref{thmdp} with running time
$O(|V_1|^2|V_2|^2)$. Specifically, for a fixed $v'\in \tau(v)\setminus\{v\}$ the maximum weight antichains rooted in $y\in c(x)\setminus[u,u']$ are independent of eachother, allowing us to simply accumulate them while iterating through $u'\in\tau(u)\setminus\{u\}$.

\section{Anti Tai mapping problem}\label{s3}

We first present a quadratic algorithm for computing optimal anti Tai mapping in the case when one of the trees consists of a single root-to-leaf path. It generalizes an algorithm from \cite{trajan} which computes optimal anti Tai mapping in the case when both trees consist of a single root-to-leaf path.

\begin{tm}\label{thmopt}
  Optimal anti Tai mapping $M$ of path $[u,v]$ and tree rooted in $x$ satisfies
  $$w(M)=\gamma (x,u,v):=w(v, x) +
  \begin{cases}
    \displaystyle\sum_{y\in c(x)}\gamma(y,u,v) & v=u \\
    \max
    \begin{cases}
      \displaystyle\sum_{y\in c(x)}\gamma(y,u,v) \\
      \gamma(x, u, p(v))
    \end{cases} & v\ne u
  \end{cases}
  $$
  \proof
  Let $(k,l)$ be edge selected at any point of the dynamic program. Then any element of $[u, k]\times \tau(l)$ forms an anti Tai mapping with $(k,l)$. Inductively, any set found by the dynamic program is anti Tai mapping. Conversely, let $M$ be an anti Tai mapping. If $(k,l)\in M$ then $(x,y)\notin M$ for all $(x,y)\in ([k,v]\setminus\{k\})\times (\tau(l)\setminus \{l\})$ and $ (x,y)\notin M$ for all $(x,y)\in ([u,k]\setminus \{k\})\times ([x,l]\setminus\tau(l))$ so inductively there is a sequence of steps corresponding to $M$.
  \qed
\end{tm}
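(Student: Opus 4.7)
The plan is to argue both directions of the recurrence by structural induction on the sizes of the path and subtree, driven by three compatibility observations about (\ref{antiTai}).

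First, I would prove a ``lossless inclusion'' lemma: for any anti Tai mapping $M$ of the path $[u,v]$ and the subtree rooted at $x$ with $(v,x) \notin M$, the set $M \cup \{(v,x)\}$ is still anti Tai. Given any $(k,l) \in M$, the first biimplication of (\ref{antiTai}), namely $(k \leq v) \iff (l \not\leq x)$, reduces to ``$\mathrm{true} \iff l \neq x$'', while the second, $(v \leq k) \iff (x \not\leq l)$, reduces to ``$k = v \iff \mathrm{false}$''; at least one holds because $(k,l) \neq (v,x)$. Combined with the non-negativity of $w$, this justifies pulling out the additive $w(v,x)$ term in the recurrence and restricting attention to optima containing $(v,x)$.

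Second, I would establish the dichotomy that underlies the $\max$: remaining edges split into ``root edges'' $(k,x)$ with $k < v$, ``leaf edges'' $(v,l)$ with $l > x$, and ``interior edges'' $(k,l)$ with $k < v$ and $l > x$. A direct substitution into (\ref{antiTai}) shows one root edge and one leaf edge cannot coexist in an anti Tai mapping, since both biimplications evaluate to false (the path endpoints are ordered $k < v$ while the tree endpoints are ordered $x < l$). Hence either (Case A) $M$ has no root edges, so $M \setminus \{(v,x)\} \subseteq \bigcup_{y \in c(x)} [u,v] \times \tau(y)$, or (Case B) $M$ has no leaf edges, so $M \setminus \{(v,x)\} \subseteq [u, p(v)] \times \tau(x)$.

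Third, I would decompose each case. In Case A the restrictions $M_y := M \cap ([u,v] \times \tau(y))$ are individually anti Tai; two edges from distinct $M_{y_1}, M_{y_2}$ are automatically compatible because their tree endpoints are incomparable (distinct sibling subtrees) while their path endpoints are comparable, so at least one biimplication of (\ref{antiTai}) is forced true. The converse --- that a union of arbitrary anti Tai $M_y$'s together with $\{(v,x)\}$ is anti Tai --- uses the same observation and the lossless inclusion lemma. By induction this yields $\sum_{y \in c(x)} \gamma(y,u,v)$. In Case B, $M \setminus \{(v,x)\}$ is precisely an anti Tai mapping of $[u,p(v)]$ and the subtree at $x$, and lossless inclusion also handles the converse, giving $\gamma(x, u, p(v))$. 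The base case $v=u$ forbids Case B since $p(u)$ is undefined, matching the first branch of the recurrence.

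The main technical obstacle is the cascade of small case-checks on (\ref{antiTai}): each of the three structural claims (lossless inclusion, root-leaf exclusivity, sibling compatibility) hinges on tracing which of the two biimplications survives given the relative positions of the endpoints in the two trees. Once these verifications are in place, the theorem follows by routine structural induction on $|\tau(x)| + |[u,v]|$.
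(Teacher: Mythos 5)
Your proof is correct and follows essentially the same route as the paper's: the paper's observation that a selected edge $(k,l)$ is compatible with everything in $[u,k]\times\tau(l)$ is your lossless-inclusion lemma, and its forbidden-quadrant claims for the converse are exactly your root/leaf exclusivity dichotomy. Your write-up is merely more explicit, in particular in verifying the cross-sibling compatibility needed for the sum over children, which the paper leaves implicit.
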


Running time of the algorithm presented by theorem \ref{thmopt} is $O(|V_1||V_2|)$. We see this by noting that the dynamic program has $O(|V_1||V_2|)$ states and that for a fixed $y\in [u,v]$, computing all $\gamma(x,u,y)$ amounts to a single tree traversal.

Theorem \ref{thmdp} therefore suggests a natural generalization of theorem \ref{thmopt} to the case of two arbitrary trees by simply replacing $\alpha$ with $\gamma$.

\begin{cor}\label{mainc}
  $\displaystyle\max_{M \text{anti Tai mapping}}w(M)\ge\max \begin{cases}f(r(G_1), r(G_2)) \\ f(r(G_2), r(G_1))\end{cases}$, where
  $$f(u,v)=\max\begin{cases}\displaystyle\max_{\substack{u'\in \tau (u)\setminus \{u\} \\ v'\in \tau (v)\setminus \{v\}}}\left[ f(v', u') + \sum_{x\in [u,p(u')]} \sum_{y\in c(x) \setminus [u,u']} \gamma(y, v, v')\right] \\ \displaystyle\max_{v'\in \tau(v)} \gamma (u, v, v')\end{cases}$$

\end{cor}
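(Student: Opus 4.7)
The plan is to mirror the inductive construction used in Theorem~\ref{thmdp} while verifying that substituting $\gamma$ for $\alpha$ still yields a valid anti Tai mapping (rather than only an si-antimatching). Since the left-hand side of the corollary is the maximum weight over all anti Tai mappings, it suffices to exhibit an anti Tai mapping whose weight equals the right-hand side of the inequality.

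By Theorem~\ref{thmopt}, each value $\gamma(y,v,v')$ is realized by an explicit anti Tai mapping $N_y$ between the path $[v,v']$ and the subtree $\tau(y)$. Unrolling $f(u,v)$, the mapping it produces is either a single $N_u$ (the base branch) or a union $M = M' \cup \bigcup_y N_y$, where $y$ ranges over $c(x)\setminus[u,u']$ for $x\in[u,p(u')]$, and $M'$ is an anti Tai mapping realizing $f(v',u')$, supplied by the inductive hypothesis on recursion depth. The main verification is that $M$ satisfies (\ref{antiTai}) for every pair of distinct elements $(a,b),(a',b')\in M$ coming from different ingredients. The ``both inside the same $N_y$'' and ``both inside $M'$'' cases are immediate from Theorem~\ref{thmopt} and the inductive hypothesis respectively. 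In the two cross cases I would invoke the structural fact already used in Theorem~\ref{thmdp}: the vertices $y$ considered form an antichain and all their subtrees are incomparable with $\tau(u')$, so $a\not\sim a'$ on the $T_1$-side. On the $T_2$-side, either both $b,b'\in[v,v']$, or $b\in[v,v']$ and $b'\in\tau(v')$ with $b\le v'\le b'$, so $b\sim b'$ in either case. Whenever $a\not\sim a'$ and $b\sim b'$, both $a\le a'$ and $a'\le a$ are false, and at least one of the biconditionals in (\ref{antiTai}) reduces to ``false $\iff$ false,'' which is true.

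The step I expect to require the most care is the bookkeeping across recursion levels: keeping track of which of the two trees plays the role of ``first argument'' to $f$ at each depth, and confirming that the image of $M'$ lies entirely in $\tau(u')\times\tau(v')$, disjoint from and incomparable with the portion of $T_1\times T_2$ covered by the $N_y$'s. Once these observations from the proof of Theorem~\ref{thmdp} are reused, the case analysis above is mechanical and yields $w(M)\ge f(r(G_1),r(G_2))$; the symmetric construction starting from $f(r(G_2),r(G_1))$ gives the other bound, and together they establish the claim.
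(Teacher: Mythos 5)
Your proposal is correct and follows essentially the same route the paper intends: Corollary~\ref{mainc} is stated without an explicit proof precisely because it is the construction of Theorem~\ref{thmdp} with $\alpha$ replaced by $\gamma$, and only the ``constructed set is an anti Tai mapping'' direction is needed for the inequality. Your case analysis (incomparability of the $\tau(y)$'s with each other and with $\tau(u')$ on one side, comparability of $[v,v']$ with $\tau(v')$ on the other, so one biconditional in (\ref{antiTai}) becomes ``false $\iff$ false'') correctly fills in the details the paper leaves implicit.
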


\subsection{Implementation}

The algorithm of corollary \ref{mainc} can be implemented with asymptotic running time of $O(|V_1|^2|V_2|^2)$, analogously to the optimal si-antimatching algorithm of theorem \ref{thmdp}. We provide a sample implementation at \cite{github}. It is also worth noting, for practical purposes, that those algorithms admit a straightforward parallelization. Specifically, all $v'\in\tau(v)\setminus\{v\}$ can be checked in parallel.

\subsection{Anti Tai mapping on DAGs}
In the following section, we will make two straightforward observations regarding generalizations of anti Tai mapping to more general directed acyclic graphs (DAGs). First note that theorem \ref{thmopt} cannot be extended to DAGs since it depends on sets of descendants of distinct children being disjoint. Observation~\ref{tm:DAG_lowerbound} notes that we can, however, do somewhat better than falling back to computing anti Tai mapping on pairs of paths (as is done in \cite{trajan}) by extending a path to an arbitrary topological ordering. Observation~\ref{tm:DAG_antichain} relates maximum-weight anti Tai mapping to maximum weight antichain in the graph constructed by the product of 
two DAGs solving the problem when one of the DAGs is a path in polynomial time.

\begin{obs}\label{tm:DAG_lowerbound}
  Let $G_1=(V_1,E_1)$ and $G_2=(V_2, E_2)$ be directed acyclic graphs, $V_1=\{u_1,...,u_n\}$, $u_1<...<u_n$, $(v_1, v_2,...,v_m)$ arbitrary topological ordering of vertices of $G_2$. Then optimal anti Tai mapping $M$ of $G_1$ and $G_2$ satisfies $w(M)\ge\delta(u_n,v_1)$ where $$\delta (u_i, v_j):=w(u_i,v_j)+\max\begin{cases}\delta(u_{i-1},v_{j}) & i\ne 1 \\ \delta(u_{i},v_{j+1}) & j\ne m\end{cases}$$
\end{obs}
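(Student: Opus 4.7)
The plan is to interpret $\delta(u_n,v_1)$ as the total weight of an explicit set $M^{\mathrm{DP}}\subseteq V_1\times V_2$ produced by the recursion, and then verify directly that $M^{\mathrm{DP}}$ is an anti Tai mapping. Since $w$ is non-negative, the bound $w(M^*)\ge \delta(u_n,v_1)$ then drops out immediately from the maximality of $M^*$.

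Unrolling the recurrence shows that
\[
\delta(u_n,v_1)\;=\;\max_{P}\sum_{(i,j)\in P}w(u_i,v_j),
\]
where $P$ ranges over monotone staircase paths in the grid $\{1,\dots,n\}\times\{1,\dots,m\}$ that start at $(n,1)$, end at $(1,m)$, and at each step either decrement the first coordinate or increment the second. Fixing an optimal such path $P^*$ and setting $M^{\mathrm{DP}}:=\{(u_i,v_j):(i,j)\in P^*\}$, one has $w(M^{\mathrm{DP}})=\delta(u_n,v_1)$ by construction.

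The core step is to show $M^{\mathrm{DP}}$ is an anti Tai mapping, i.e.\ any two distinct $(u_i,v_j),(u_{i'},v_{j'})\in M^{\mathrm{DP}}$ satisfy (\ref{antiTai}). Assume without loss of generality $i\ge i'$; monotonicity of $P^*$ then forces $j\le j'$. Two ingredients suffice: (a) since $u_1<\dots<u_n$ is an ancestor chain in $G_1$, $i'\le i$ implies $u_{i'}\le u_i$, and $i'<i$ implies $u_i\not\le u_{i'}$; (b) since $v_1,\dots,v_m$ is a topological ordering of $G_2$, $j<j'$ implies $v_{j'}\not\le v_j$ (an ancestor would have to appear earlier in the order). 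A short split on whether $i=i'$ or $i>i'$ and whether $j=j'$ or $j<j'$ verifies (\ref{antiTai}) in each case: in the ``generic'' subcase $i>i'$ and $j<j'$ the second disjunct $(u_{i'}\le u_i)\iff (v_{j'}\not\le v_j)$ reads true $\iff$ true; when $i=i'$, $j<j'$ the same disjunct still holds by equality on the left and topological order on the right; when $i>i'$, $j=j'$ one instead uses the first disjunct $(u_i\le u_{i'})\iff (v_j\not\le v_{j'})$, both sides being false.

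The subtlety I would flag as the main obstacle is making explicit that ``$u_1<\dots<u_n$'' must be read in the \emph{ancestor} sense, so that $V_1$ is a chain in $G_1$, and not merely as a topological order. Were it only topological, the generic subcase above could fail: two pairs with $u_i,u_{i'}$ mutually incomparable in $G_1$ and $v_j,v_{j'}$ mutually incomparable in $G_2$ violate (\ref{antiTai}), and the DP output would not be a valid anti Tai mapping. Once the chain interpretation is fixed, the case analysis above goes through and $w(M^*)\ge w(M^{\mathrm{DP}})=\delta(u_n,v_1)$ follows from non-negativity of $w$.
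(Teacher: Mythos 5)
Your proof is correct and follows essentially the same route as the paper: the paper's one-line justification is exactly your ingredient (b), that a topological ordering guarantees $j<j'\implies v_{j'}\not\le v_j$, and you simply make explicit the staircase unrolling of $\delta$ and the case analysis verifying condition (\ref{antiTai}). Your flagged subtlety is also consistent with the paper's intent: $u_1<\dots<u_n$ is written with the ancestor order $<$, so $V_1$ is indeed a chain, which is what the observation's framing (``extending a path to an arbitrary topological ordering'') presupposes.
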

The observation follows from the fact that by definition of topological ordering we have $i<j\implies v_i<v_j\text{ or }v_i$ incomparable with $v_j$.

% Andras Frank, January 28, 2019
\begin{obs}\label{tm:DAG_antichain}
  Let $G_1=(V_1,E_1)$ and $G_2=(V_2, E_2)$ be directed acyclic graphs, $V_1=\{u_1,...,u_n\}$, $u_1<...<u_n$. Let $G=(V_1\times V_2, E)$ where $E=\{((u,v),(u',v')) : u<u' \text{ and } v<v'\}$. Then the weight of optimal anti Tai mapping of $G_1$ and $G_2$ is equal to the weight of a maximum weight antichain in $G$.
\end{obs}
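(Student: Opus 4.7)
The plan is to verify that the directed edge relation defining $G$ is really a strict partial order on $V_1\times V_2$, and then show that two distinct pairs $(x,y),(x',y')$ are incomparable in that order if and only if they satisfy the anti Tai condition (\ref{antiTai}). Once that equivalence is established, the sets of vertices appearing in anti Tai mappings of $G_1$ and $G_2$ coincide exactly with the antichains of $G$, so the optimal weights match.

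First I would check that $E$, viewed as a strict order $\prec$ with $(u,v)\prec(u',v')$ iff $u<u'$ and $v<v'$, is irreflexive, antisymmetric, and transitive; all three follow directly from the fact that $<$ is a strict order on each $G_i$. An antichain in $G$ is therefore a set of elements that are pairwise $\prec$-incomparable, i.e.\ a set of pairs $\{(x,y),(x',y')\}$ such that neither $(x<x'\wedge y<y')$ nor $(x'<x\wedge y'<y)$.

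The main step is a short case analysis, partitioning the possibilities for a distinct pair $(x,y)$, $(x',y')$ according to the relation of $x$ to $x'$ in $G_1$ (strictly above, strictly below, equal, or incomparable) and likewise for $y,y'$ in $G_2$. In each of the $4\times 4$ (minus the equal/equal) boxes I would plug the four truth values into the biconditionals of (\ref{antiTai}) and separately check whether $(x<x'\wedge y<y')$ or $(x'<x\wedge y'<y)$ holds. The two configurations that falsify (\ref{antiTai}) are exactly $(x<x'\wedge y<y')$ and its reverse $(x'<x\wedge y'<y)$, since these are the cases where both biconditionals equate a true proposition with a false one; every other configuration leaves at least one biconditional true. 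These are precisely the same configurations that place a $\prec$-edge between the two pairs in $G$, so anti Tai compatibility coincides with $\prec$-incomparability on distinct pairs.

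From here the statement is immediate: a set $M\subseteq V_1\times V_2$ satisfies (\ref{antiTai}) on every pair of its distinct elements iff $M$ is an antichain in $G$, and since both maxima are over the same feasible sets with the same additive weight, they are equal. I do not expect real difficulty here; the only mildly delicate point is treating the boundary cases $x=x'$ or $y=y'$ (which cannot co-occur for distinct pairs) to make sure the biconditionals are parsed with reflexivity $x\le x$ in mind, so that one of the two disjuncts in (\ref{antiTai}) always evaluates to true in those cases, matching the absence of a $\prec$-edge in $G$.
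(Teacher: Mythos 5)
Your overall strategy is the same as the paper's (which states the whole argument in one line): observe that $E$ defines a strict partial order $\prec$ on $V_1\times V_2$, identify anti Tai compatibility of two distinct pairs with $\prec$-incomparability, and conclude that anti Tai mappings and antichains of $G$ are the same feasible sets. So there is no difference in route, only in the level of detail.

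There is, however, a concrete error in the claimed outcome of your case analysis. You set up a $4\times 4$ table that includes the case where $x$ and $x'$ are incomparable in $G_1$, and you then assert that the only configurations falsifying (\ref{antiTai}) are $(x<x'\wedge y<y')$ and its reverse. That is false: if $x$ and $x'$ are incomparable \emph{and} $y$ and $y'$ are incomparable, then in both disjuncts of (\ref{antiTai}) the left side ($x\le x'$, resp.\ $x'\le x$) is false while the right side ($y\not\le y'$, resp.\ $y'\not\le y$) is true, so both biconditionals fail and (\ref{antiTai}) is violated --- yet there is no $\prec$-edge between the two pairs. Hence ``anti Tai compatible $\iff$ $\prec$-incomparable'' is simply not true for two arbitrary DAGs, and had you filled in your table as described you would have hit this cell. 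The statement is rescued only by the hypothesis $u_1<\cdots<u_n$, i.e.\ $V_1$ is totally ordered (one of the two DAGs is a chain), which makes the incomparable row for $G_1$ vacuous; this is also why the paper presents the observation as solving the problem ``when one of the DAGs is a path.'' Your proof never invokes this hypothesis, so as written it proves a false general claim; you need to add the sentence that totality of $<$ on $V_1$ excludes the offending configuration, after which the rest of your argument goes through.
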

The observation follows from the fact that $G$ is a transitively closed directed acyclic graph and $\{x,y\}$  is an anti Tai mapping if and only if  $(x,y)\notin E\text{ and }(y,x)\notin E$.

\section{Conclusion}\label{s5}
In this paper we consider the problem of generating cutting planes for the natural integer programming formulation of the Tai mapping problem. Our cutting planes are based on finding a maximum-weight clique in a subgraph defined on the set of pairs of nodes of the two given trees, which we call an anti Tai mapping. For the special class of si-antimatching, we give a decomposition theorem that describes its precise structure and hence allows us to use dynamic programming to find the maximum-weight si-antimatching in  $O(|V_1|^2|V_2|^2)$ time. Inspired by this result, we also obtain a dynamic program that provides a polynomially computable lower bound  on the maximum-weight anti Tai mapping. Whether the latter problem is NP-hard remains an interesting open question.

%There are various bibliography styles available. You can select the style of your choice in the preamble of this document. These styles are Elsevier styles based on standard styles like Harvard and Vancouver. Please use Bib\TeX\ to generate your bibliography and include DOIs whenever available.

%Here are two sample references: \cite{Feynman1963118,Dirac1953888}.

%\section*{References}

\bibliography{mybibfile}

\end{document}